\newtheorem{theorem}{Theorem}[section]
\newtheorem{lemma}[theorem]{Lemma}
\newtheorem{definition}{Definition}[section]
\newtheorem{corollary}[theorem]{Corollary}
\title{On Dynamic Optimality for Binary Search Trees\
}
\author{Navin Goyal \footnote{Microsoft Research, India}\\
   {\small\texttt{navingo@microsoft.com}} 
 \and
Manoj Gupta \footnote{IIT Delhi: Work done while visiting Microsoft Research, India}\\
{\small\texttt{gmanoj@cse.iitd.ernet.in}}
}
\date{}                                           % Activate to display a given date or no date
\begin{document}

%\section{}
%\subsection{}
\maketitle
\begin{abstract}
Does there exist $O(1)$-competitive (self-adjusting) binary search tree (BST) algorithms? This is a well-studied problem. A simple offline 
BST algorithm GreedyFuture was proposed independently by Lucas~\cite{lucas} and Munro~\cite{Munro00}, and they 
conjectured it to be $O(1)$-competitive. Recently, 
Demaine~et~al.~\cite{geom} gave a geometric view of the BST problem. This view allowed them to give an online 
algorithm
GreedyArb with the same cost as GreedyFuture.  However, no $o(n)$-competitive ratio was known for GreedyArb. 
In this paper we make progress towards proving $O(1)$-competitive ratio for GreedyArb by showing that it is 
$O(\log n)$-competitive.  
\end{abstract}

\section{Introduction}
Binary search trees (BST) are a data structure for the dictionary problem. 
There are many examples of (static or offline) binary search trees, e.g. AVL trees and red-black trees, which take $O(\log n)$ worst-case time per search
query (and possibly for other types of operations such as insert/delete; but we will confine ourselves to search queries in this note) on $n$ keys. For static trees, $O(\log n)$ bound cannot be improved. But trees that can change shape in response to queries can potentially have smaller
amortized search time per query. In this case, competitive analysis is used to measure the performance. 
%% Theoretically, there can an optimal BST which can process the same request in $O(n)$ time, that is $O(1)$ time per request.

Splay trees~\cite{tarjan} of Sleator and Tarjan are simple self-adjusting binary search trees which are $O(\log n)$-competitive.  Sleator and Tarjan~\cite{tarjan} conjectured that Splay trees were in fact $O(1)$-competitive. Unfortunately, despite considerable efforts $o(\log{n})$-competitiveness for Splay trees is not known; see, e.g., \cite{Pettie08} for a recent discussion. But even a potentially easier question remains open: Is there \emph{any} BST algorithm 
which is $O(1)$-competitive?
In the past decade progress was made on this question and BST algorithms with better competitive ratio were discovered:  
Tango trees~\cite{tango} were the first $O(\log \log n)$-competitive BSTs; Multi-Splay trees~\cite{multisplay} and Zipper trees~\cite{zipper} also have the same competitive ratio along with some additional properties. 
%% These trees are sophisticated and aim only at getting $O( \log \log n)$ competitiveness.
Analyses of these trees use lower bound for the time taken to complete a sequence of requests. Wilber~\cite{wilber} gave two different such lower bounds. Wilber's first lower bound is used in \cite{tango}, \cite{multisplay} and \cite{zipper} to obtain $O(\log \log n)$ competitiveness. These techniques based on Wilber's bound have so far failed to give $o(\log \log n)$-competitiveness.

Even the off-line problem is not well-understood, and the best performance guarantee known is the same as the online guarantee obtained
by Tango trees just mentioned. Lucas~\cite{lucas} and Munro~\cite{Munro00}, independently proposed an offline BST algorithm
GreedyFuture (this name comes from \cite{geom}) and they conjectured that GreedyFuture was $O(1)$-competitive. But even $o(n)$-competitiveness
was not known for GreedyFuture. Our main result, Theorem~\ref{thm} below, implies that GreedyFuture is $O(\log n)$-competitive, thus making progress towards $O(1)$-competitive ratio for GreedyFuture.

A new line of attack on the BST problem was given by Demaine et al.~\cite{geom}. 
They gave a geometric view of the problem of designing BST algorithms. This allowed them 
to translate conjectures and results about BSTs into intuitively appealing geometric statements. We now quickly 
describe their geometric view.

\subsection{Problem Definition} \label{sec:probdef}
In the BST problem we want to maintain $n$ keys in a binary search tree to serve search requests. In response to 
each search request, the algorithm is allowed to modify the structure of the tree, but this change in the structure
adds to the search cost. We are interested in designing 
BST algorithms with small amortized search cost, in other words, algorithms with small competitive ratio. 
Since we will only work in the geometric view, we define the BST problem formally only in the geometric model and 
omit the formal definition of the BST model; please see \cite{geom} for details.

In the geometric view of the BST problem, we will work in the two-dimensional plane with a fixed cartesian coordinate system. We have $n$ keys 
in the tree, which we will assume to be $1, 2, \ldots, n$. And we have $n$ search queries coming at time instants 
$1, 2, \ldots, n$, one for each key (this, of course, is not the general situation as a key could
be searched more than once, but as discussed in \cite{geom}, we can work with this case without loss of generality).
We can represent these $n$ queries as points in the plane:
Let $X=\{p_1, p_2,\dots, p_n\}$ be a set of $n$ points in the two-dimensional plane define as follows. Let the $x$-axis represent the key space and $y$-axis represent time. Each $p_i$ is represented by a pair $(i,t_i)$ where both $i$ and $t_i$ are integers. We say that the key $i$ arrive at time $t_i$. For a point $p$, let $p.x$ denote its $x$-coordinate and $p.y$ denote its $y$-coordinate.  Clearly for 
all distinct $p,q \in X$ we have $p.x \ne q.x$, $p.y \ne q.y$. That is to say, there exists exactly one point from $X$ on line $x=i$, for $1 \le i \le n$. Similarly, there exists exactly one point from $X$ on line $y=i$, for $1 \le i \le n$. For a pair of points $p, q$ not on the same horizontal or vertical line, the axis-aligned rectangle formed by $p$ and $q$ is denoted by $\Box pq$

\begin{definition}[\cite{geom}]
A pair of points $(p,q)$ is said to be \emph{arborally satisfied} with respect to a point set $P$, if (1) $p$ and $q$ lie on the same horizontal and vertical line, or (2) $\exists r \in P \setminus \{p,q\}$ such that $r$ lies inside or on the boundary of
 $\Box pq$. A point set $P$ is \emph{arborally satisfied} if all pairs of points in $P$ are arborally satisfied with 
respect to $P$.
\end{definition}

{\bf Arborally Satisfied Set (ArbSS) Problem}: \emph{Given a point set $X$, find a minimum cardinality  point set $Y$ such that $X \cup Y$ is arborally satisfied.}
[While several definitions here are from \cite{geom}, we have chosen to use less colorful abbreviations than in \cite{geom}.] 
Let MinArb$(X)$ denote the minimum cardinality point set which solves the ArbSS problem on a point set $X$.
\cite{geom} shows that the BST view and the geometric view are essentially equivalent. In particular, if OPT$(S)$ is the minimum cost of 
computing a request sequence $S$, then MinArb$(X)$ = $\Theta($OPT$(S))$, where $X$ is the set of points in the plane corresponding to $S$. 

\subsection{GreedyArb Algorithm}
There is a natural greedy algorithm for the ArbSS problem: 
\begin{quote}
Sweep the point set $X$ with a horizontal line by increasing the $y$-coordinates. Let the point $p$ be processed at time $p.y$. At time $p.y$, place the minimal number of points on line $y=p.y$ to satisfy the rectangles with $p$ as one endpoint and other endpoint in $X \cup Y$ with $y$-coordinate less than $p.y$. This minimal set of point $M_{p}$ is uniquely defined: for any unsatisfied rectangle formed with $(s, p.y)$ as one of the corner point, add a point at $(s, p.y)$. 
\end{quote}

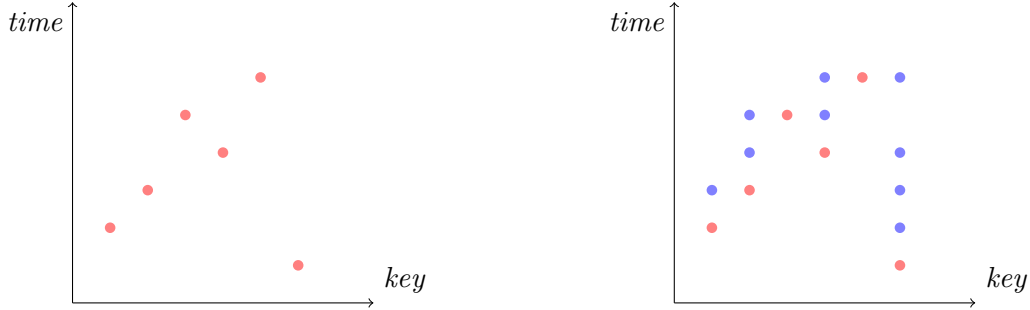
\begin{figure}
\label{figure1}
\begin{tikzpicture}
\draw[->] (0,0) -- (4,0) node[above right] {{\em key}};
\draw[->] (0,0) -- (0,4) node[below left] {{\em time}} ;

\coordinate (A) at (.5,1);
\fill [red,opacity=.5] (A) circle (2pt); 

\coordinate (A) at (1,1.5);
\fill [red,opacity=.5] (A) circle (2pt); 

\coordinate (A) at (1.5,2.5);
\fill [red,opacity=.5] (A) circle (2pt); 

\coordinate (A) at (2,2);
\fill [red,opacity=.5] (A) circle (2pt); 

\coordinate (A) at (2.5,3);
\fill [red,opacity=.5] (A) circle (2pt); 

\coordinate (A) at (3,.5);
\fill [red,opacity=.5] (A) circle (2pt);

\begin{scope}[xshift=8cm]
\draw[->] (0,0) -- (4,0) node[above right] {{\em key}};
\draw[->] (0,0) -- (0,4) node[below left] {{\em time}} ;

\coordinate (A) at (.5,1);
\fill [red,opacity=.5] (A) circle (2pt); 

\coordinate (A) at (1,1.5);
\fill [red,opacity=.5] (A) circle (2pt); 

\coordinate (A) at (1.5,2.5);
\fill [red,opacity=.5] (A) circle (2pt); 

\coordinate (A) at (2,2);
\fill [red,opacity=.5] (A) circle (2pt); 

\coordinate (A) at (2.5,3);
\fill [red,opacity=.5] (A) circle (2pt); 

\coordinate (A) at (3,.5);
\fill [red,opacity=.5] (A) circle (2pt);

\coordinate (A) at (3,1);
\fill [blue,opacity=.5] (A) circle (2pt); 

\coordinate (A) at (.5,1.5);
\fill [blue,opacity=.5] (A) circle (2pt); 

\coordinate (A) at (3,1.5);
\fill [blue,opacity=.5] (A) circle (2pt); 

\coordinate (A) at (1,2);
\fill [blue,opacity=.5] (A) circle (2pt); 

\coordinate (A) at (3,2);
\fill [blue,opacity=.5] (A) circle (2pt); 

\coordinate (A) at (1,2.5);
\fill [blue,opacity=.5] (A) circle (2pt); 

\coordinate (A) at (2,2.5);
\fill [blue,opacity=.5] (A) circle (2pt); 

\coordinate (A) at (3,3);
\fill [blue,opacity=.5] (A) circle (2pt); 

\coordinate (A) at (2,3);
\fill [blue,opacity=.5] (A) circle (2pt);

\end{scope}
\end{tikzpicture}
\caption{The red point are the point set $X$ and the blue points are added by GreedyArb Algorithm}
\end{figure}

GreedyArb is an online algorithm in the sense that at each time instant it adds a set of points so that the resulting set is arborally satisfied,
and it does so without the knowledge of the future requests. \cite{geom} shows that GreedyArb can be used to derive an online BST algorithm with
the same competitive ratio. 
 It is conjectured that GreedyArb is $O(1)$-competitive, that is, the number of points added by the GreedyArb algorithm on a point set $X$ is $O(|MinArb(X)|)$.  Surprisingly, GreedyArb has the same competitive ratio as the aforementioned (offline) BST 
algorithm GreedyFuture \cite{geom}. %% If we are able to prove $O(1)$-competitiveness of GreedyArb, it also proves $O(1)$ competitiveness of  GreedyFuture . 
%% This is the connection proved by Demaine et al.~\cite{geom}. 
To our knowledge, no non-trivial (i.e. $o(n)$) competitive factor was known for GreedyFuture or GreedyArb. 
In this paper we prove the following result:
\begin{theorem} \label{thm}
GreedyArb is $O(\log n)$-competitive. 
\end{theorem}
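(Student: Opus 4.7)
The plan is to prove $|Y^{\mathrm{GA}}| \leq O(\log n)\cdot(|X|+|Y^*|)$, where $Y^{\mathrm{GA}}$ is the set of points added by GreedyArb and $Y^* = \mathrm{MinArb}(X)$; since $|X|+|Y^*| = \Theta(\mathrm{OPT}(S))$, this implies Theorem~\ref{thm}. First, I fix a perfect binary tree $\mathcal{T}$ on the key universe $\{1,\dots,n\}$, of depth $L=O(\log n)$. For any two distinct keys $a,b$, let $\mathrm{lev}(a,b)\in\{1,\dots,L\}$ denote the depth of their least common ancestor in $\mathcal{T}$; equivalently, at level $\ell$ the universe partitions into $2^\ell$ contiguous \emph{segments}, and $\mathrm{lev}(a,b)=\ell$ exactly when $a$ and $b$ lie in two adjacent level-$\ell$ segments separated by a level-$\ell$ boundary. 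Each point added by GreedyArb is generated by a unique witness pair $(p,q)$: the current query $p$ together with a past point $q$ such that, immediately before the addition, $\Box pq$ contained no other point of $X \cup Y^{\mathrm{GA}}$. I assign to this GA point the level $\mathrm{lev}(p.x, q.x)$.

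It then suffices to prove the following per-level bound: for every $\ell\in\{1,\dots,L\}$, the number of level-$\ell$ GA points is $O(|X|+|Y^*|)$. Summing over $\ell$ yields the target inequality. For each level-$\ell$ boundary $b$, let $I_L(b), I_R(b)$ be the two adjacent level-$\ell$ segments, and let $N_b^\ell$ count the level-$\ell$ GA points whose witness pair $(p,q)$ straddles $b$ (i.e., $p.x$ and $q.x$ lie on opposite sides of $b$, inside $I_L(b)\cup I_R(b)$). I would then establish a boundary-charging lemma of the form $N_b^\ell \leq c\cdot|\{u\in X\cup Y^*:\,u.x\in I_L(b)\cup I_R(b)\}|$ for an absolute constant $c$. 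Given this lemma, summing over all level-$\ell$ boundaries $b$ yields the per-level bound, since each $u\in X\cup Y^*$ lies in exactly one level-$\ell$ segment, which is adjacent to at most two level-$\ell$ boundaries, and so is counted at most twice.

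The main obstacle is proving this boundary-charging lemma. The subtle difficulty is that the witness $q$ in a GA pair $(p,q)$ may itself be a previously GA-added point rather than a point of $X\cup Y^*$, so one cannot simply charge the addition to an OPT point at location $q$. My plan is to process the time-ordered sequence of level-$\ell$ GA points that cross $b$ and to exploit two structural properties: first, at each time, GreedyArb's left and right witnesses of the current query form monotone staircases, so between two consecutive level-$\ell$ crossings at $b$ the configuration of GA points in the vertical strip $I_L(b)\cup I_R(b)$ must have undergone a structural change; second, the global set $X\cup Y^*$ is arborally satisfied, so every rectangle with endpoints in $X\cup Y^*$ on opposite sides of $b$ already contains a point of $X\cup Y^*$ in $I_L(b)\cup I_R(b)$. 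Combining these via a case analysis on whether two consecutive crossings at $b$ involve queries on the same or on opposite sides of $b$, and on the temporal interleaving of the relevant OPT points, should yield an injection from consecutive crossings to distinct OPT points in $I_L(b)\cup I_R(b)$, establishing the lemma and hence the theorem.
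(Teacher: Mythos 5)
Your hierarchical decomposition is sound and is in fact just a reformulation of the paper's argument: the paper splits any block of $2k$ consecutive keys into halves $P$ and $Q$ and proves (Lemma~\ref{lem:main}) that the number of points GreedyArb adds in $R_P$ while serving queries in $Q$ is $O(k)$, which is exactly your quantity $N_b^\ell$ for the boundary $b$ between the two segments; summing this over a balanced tree of boundaries is the same as unrolling the recurrence $T(n)=2T(n/2)+O(n)$. Note also that you do not need the $|Y^*|$ term at all: since $\mathrm{OPT}\geq n=|X|$, a bound of $N_b^\ell = O(|X\cap (I_L(b)\cup I_R(b))|)=O(k)$ already yields $O(n\log n)$ total and hence $O(\log n)$-competitiveness, and this purely combinatorial bound is what the paper actually proves. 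Your insistence on an injection into OPT points is an unnecessary (and harder) detour.

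The genuine gap is that the boundary-charging lemma --- the entire technical content of the theorem --- is asserted, not proved, and your sketch for it does not contain the ideas needed to make it work. Two things are missing. First, even in the clean case where no keys lie outside the strip, ``the witnesses form monotone staircases, so the configuration must have changed'' is not an argument: the paper needs an explicit potential function, namely the set $C_t$ of \emph{corner points} (the maximal staircase of the left segment visible from the right), together with the facts that a query in $P$ increases $|C_t|$ by at most $1$ (Lemma~\ref{lemma51}) while a query from the right that adds $m$ points in $R_P$ decreases it by at least $m-1$ (Lemma~\ref{lemma52}); this is what converts ``structural change'' into a counting bound. Second, and more seriously, your sketch ignores interference from queries whose keys lie \emph{outside} $I_L(b)\cup I_R(b)$: such a query can deposit points inside the strip (these are crossings of a higher-level boundary, so they are not counted in $N_b^\ell$), yet they rebuild the staircase inside the strip and can thereby enable unboundedly many further level-$\ell$ crossings unless one proves otherwise. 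The paper handles exactly this with the hidden/exposed machinery (Lemma~\ref{lemma53} through Lemma~\ref{lemma55}): a column of $P$ that becomes hidden can only be re-exposed by a query inside $P$, each such query exposes at most two columns, so the number of state changes is $O(k)$, and all but the two extreme points of each batch $M^P_p$ must sit on columns that flip from exposed to hidden. Without an argument of this kind your case analysis on consecutive crossings cannot close, because consecutive level-$\ell$ crossings need not be separated by any event you are charging to.
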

Our analysis of GreedyArb exposes some interesting combinatorial properties of GreedyArb algorithm, which may be useful
in further improving the competitive ratio. 
%[Independent of our result, Patrascu\cite{talk} claimed that in an unpublished manuscript with Iacono he has proved that GreedyArb is $O(\log n)$-competitive]
[After we had written our results, it came to our attention that Patrascu in his talk slides~\cite{talk} has claimed that with Iacono he has proved that GreedyArb is $O(\log n)$-competitive. To our knowledge, this result has not appeared anywhere, and our work was done independently.]
%% We believe that there are many more combinatorial properties hidden in the GreedyArb algorithm which can be used to prove $O(1)$-competitiveness.

\section{Proof of the main result}
In this section we prove Theorem~\ref{thm}. Let us first outline our approach.

Let $X$ be the input set of $n$ points that we want to arborally satisfy by adding more points.  
We want to prove that GreedyArb adds $O(n \log n)$ points. %% For this we take refuge in the most famous equation to get $O(n \log n)$ answer in most algorithms. That is, $T(n) = 2T(n/2) + O(n)$.
We prove this by using the standard recurrence $T(n) = 2T(n/2) + O(n)$, where $T(n)$ is the maximum possible number of 
points added by GreedyArb on sets of $n$ points.  
We interpret the previous equation as follows:  Divide the $n$ points into two equal sets $P$ (points with $x$-coordinate
in $\{1, 2, \ldots, n/2\}$) and $Q$ (rest of the points). %% The first set (P) having the first $n/2$ points and the second set (Q) having the last $n/2$ points. 
For sets $P$ and $Q$ we define regions of the plane $R_P$ and $R_Q$ in the natural way: $R_p = \{r \;|\; 1/2 < r.x < n/2+1/2\}$, and similarly $R_Q = \{r \;|\; n/2+1/2 < r.x < n+1/2\}$.
%% Define regions $R_P$ and $R_Q$ in the two-dimensional space.
We show that the total number of points added by GreedyArb in $R_p$ when processing points in $Q$ is $O(n)$, and by symmetry, the number of points added in 
$R_q$ when processing points in $P$ is $O(n)$.
%% For each $p \in P$, we find the number of points ($c_p$) that GreedyArb adds in $R_Q$ while processing $p$, and similarly $c_q$ . We show that $\sum_{p \in P} c_p = O(n)$ and a similar inequality when point in $Q$ are processed $\sum_{q \in Q} c_q = O(n)$,
This gives the recurrence above for the top level. But we have to prove that the property holds more generally for our recursion to hold at all levels. In general, we show the following: Given any set of $2k$ consecutive keys, with $P$ consisting of the points corresponding to the first $k$ keys and $Q$ consisting of the points corresponding to the last $k$ keys. Then the total number of points added in $R_Q$ by GreedyArb when processing points in $P$ is $O(k)$, and similarly, the 
number of points added in $R_P$ when processing points in $Q$ is $O(k)$. The rest of the proof is devoted to showing this
last statement. Once we prove this, we get the above recurrence and that immediately completes the proof of 
Theorem~\ref{thm}. 
\medskip

We now proceed with the formal proof.

Let $S=\{p_j, p_{j+1}, \dots, p_{j+2k-1}\}$ be a set of $2k$ consecutive points in $X$ such that $p_{i+1}.x = p_i.x +1$ $ \forall j \le i < j+2k-1$. Let $P=\{p_j, p_{j+1}, \dots, p_{j+k-1}\}$ and $Q=\{p_{j+k}, p_{j+k+1}, \dots, p_{j+2k-1}\}$. Define $R_P$ to be the region between the vertical lines passing through points $j-1/2$ and $(j+k-1) + 1/2$; similarly, $R_Q$ is the region between the vertical lines passing through the points $(j+k-1)-1/2$ and $(j+2k-1)+1/2$. %% So $R_P$ is the region which only contains point set $P$ and no point from $X \setminus P$. Similarly, $R_Q$ only contains the point set $Q$.
Let $P_l=\{p_1, p_2, \dots, p_{j-1}\}$ and $Q_r=\{p_{j+2k}, p_{j+2k+1}, \dots, p_{n}\}$. Let $R_{P_l}$ be the region to the left of $R_P$ such that it contains all the point $P_l$. Similarly, $R_{Q_r}$ is the region to the right of $R_Q$ and it contains all the points in $Q_r$. Let $X_{<i}$ denote the set of all the point in $X$ which arrive before time $i$. Let $M_{p}$ be the points added by GreedyArb while processing point $p$. Let $M^P_{p} = \{ m \in M_{p}\;|$ $m$ lies in region $P$\}. \\

\begin{definition}
Let $Z^P_{<t} = \{p \in X_{<t} \;|\; p$ lies in region $R_P\} \cup \{ m \in M_{p} \;|\; p \in X$ and $p.y < t$ and point $m$ lies in the region $R_P\}$. A point $q \in Z^P_{<t}$ is said to be a corner point  in P for $Q$ at time $t$ if there is no point $q' \in Z^P_{<t} \setminus \{q\}$ such that $q'.x \ge q.x$ and $q'.y \ge q.y$.  A point $q \in Z^P_{<t}$ is said to be a corner point  in P for $P_l$ at time $t$ if there is no point $q' \in Z^P_{<t} \setminus \{q\}$ such that $q'.x \le q.x$ and $q'.y \ge q.y$. Let $C_t$ be the set of corner points in $P$ for $Q$ at time $t$.
\end{definition}

\begin{figure}
\label{figure2}
\begin{tikzpicture}
\coordinate (A) at (1.5,1.5);
\fill [black,opacity=0] (A) circle (2pt);
\begin{scope}[xshift=7cm]
\coordinate (A) at (2,.5);
\fill [black,opacity=.5] (A) circle (2pt); 

\coordinate (A) at (1.5,1);
\fill [black,opacity=.5] (A) circle (2pt); 

\coordinate (A) at (0.5,1.5);
\fill [black,opacity=.5] (A) circle (2pt);

\coordinate (A) at (0,2);
\fill [black,opacity=.5] (A) circle (2pt);

\coordinate (A) at (2,1);
\fill [black,opacity=.5] (A) circle (2pt);
\draw [above](2,1) node {\scriptsize{$p_1$}};

\coordinate (A) at (1.5,1.5);
\fill [black,opacity=.5] (A) circle (2pt);
\draw [above](1.5,1.5) node {\scriptsize{$p_2$}};

\coordinate (A) at (.5,2);
\fill [black,opacity=.5] (A) circle (2pt);
\draw [above](.5,2) node {\scriptsize{$p_3$}};

\draw [above](1,3) node {\small{$R_P$}};
\draw [above](4,3) node {\small{$R_Q$}};

\draw[-] (-0.5,0) -- (-0.5,3);
\draw[-] (2.5,0) -- (2.5,3);
\draw[-] (5.5,0) -- (5.5,3);

\draw[-,red] (-1,2.7) -- (6,2.7) ;
\node at(6,2.7)[above right] {$t$};
\end{scope}
\end{tikzpicture}
\caption{$p_1$, $p_2$ and $p_3$ are corner points in $P$ for $Q$ at time $t$}
\end{figure}
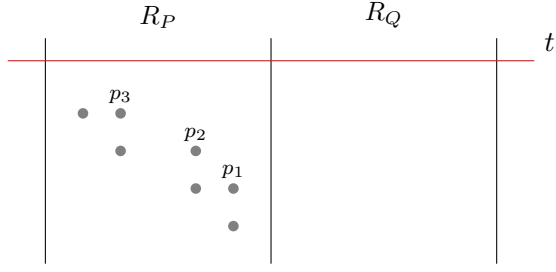

\begin{lemma}
\label{lemma51}
Let $p \in X$ be the point arriving at time $t$. If $p \in P$, then $|C_{t+1}| \le |C_t| + 1$.
\end{lemma}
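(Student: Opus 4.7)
The approach is a direct set-theoretic comparison of $C_{t+1}$ and $C_t$ by examining the ``new'' points introduced between the two snapshots. The key structural observation is that $Z^P_{<t+1} \setminus Z^P_{<t}$ consists precisely of $\{p\} \cup M_p^P$: the hypothesis $p \in P \subseteq R_P$ is what guarantees that $p$ itself is included, and by construction of GreedyArb every element of $M_p$ lies on the horizontal line $y = p.y = t$. So all newly inserted points in $R_P$ share the $y$-coordinate $t$, and this single fact drives the whole proof.

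I would then split $C_{t+1}$ into two disjoint pieces: the ``new'' corners lying on the line $y = t$, and the ``old'' corners lying in $Z^P_{<t}$. For the new corners, any two points on $y = t$ are comparable under the dominance order used in the definition (the one with larger $x$-coordinate dominates the other), and no element of $Z^P_{<t}$ can dominate a point at height $t$ because every such element has strictly smaller $y$-coordinate. Consequently at most one new point, namely the rightmost element of $\{p\} \cup M_p^P$, can belong to $C_{t+1}$.

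For the old corners, if $q \in Z^P_{<t}$ lies in $C_{t+1}$, then in particular no point of $Z^P_{<t} \setminus \{q\}$ dominates $q$, and so $q$ already belonged to $C_t$. Hence $C_{t+1} \cap Z^P_{<t} \subseteq C_t$. Combining the two bounds yields $|C_{t+1}| \le 1 + |C_t|$, which is exactly the claim. (Note that some elements of $C_t$ may indeed drop out of $C_{t+1}$ because they are dominated by the rightmost new point, but this can only help the inequality.)

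I do not foresee a serious obstacle here; the entire argument reduces to the elementary fact that inserting a collinear set of points strictly above all previously existing points into a staircase-maxima configuration creates at most one new maximum while possibly destroying some. The only mild subtlety is remembering to use the hypothesis $p \in P$ to identify $Z^P_{<t+1} \setminus Z^P_{<t}$ cleanly with $\{p\} \cup M_p^P$; after that, the dominance analysis is immediate.
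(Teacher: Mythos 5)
Your proposal is correct and follows essentially the same argument as the paper: identify the newly added points $\{p\}\cup M^P_p$, observe they are collinear on $y=t$ so only the rightmost can enter $C_{t+1}$, and conclude the count grows by at most one. Your explicit verification that $C_{t+1}\cap Z^P_{<t}\subseteq C_t$ is a detail the paper leaves implicit, but it is the same proof.
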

\begin{proof}
$M^P_{p}$ is the set of points added by GreedyArb in region $R_P$ at time $t$.  Let $r \in (M^P_{p}\cup \{p\})$ be a point such that for all $r' \in (M^P_{p} \cup \{p\})$, $r.x > r'.x$. That is, $r$ is the rightmost point in the set $M^P_{p}\cup \{p\}$. No point in $(M^P_{p} \cup \{p\})\setminus \{r\}$ can be a corner point for $Q$ at time $t$ because each one of them has point $r$ to their right in $R_P$. Only point $r$ can potentially be a corner point among
the points in $M^P_{p}\cup \{p\}$. So the number of corner points can increase by at most one at time $t+1$.
\end{proof}

\begin{lemma}
\label{lemma52}
Let $p \in X$ be the point arriving at time $t$. If $p \in Q \cup Q_r$, then (1) if $|M^P_p| = 0$, then $|C_{t+1}| = |C_t|$; and (2) if $|M^P_p| > 0$, then $|C_{t+1}| \le |C_t| - (|M^P_{p}| -1)$.
\end{lemma}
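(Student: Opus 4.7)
The plan is to reduce everything to the combinatorics of the corner-point staircase, by first identifying which points $q \in Z_{<t}^P$ can contribute to $M_p^P$, and then tracking how adding those contributions reshapes the staircase.

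First I would argue that only elements of $C_t$ can force GreedyArb to place anything inside $R_P$. Since $p \in Q \cup Q_r$ lies strictly to the right of $R_P$, if $q \in Z_{<t}^P$ is \emph{not} a corner point in $P$ for $Q$ at time $t$, then its dominator $q' \in Z_{<t}^P \setminus \{q\}$ satisfies $q'.x \ge q.x$, $q'.y \ge q.y$, $q'.y < t = p.y$, and (because $q' \in R_P$) $q'.x < p.x$; hence $q' \in \Box pq$ and this rectangle is already satisfied. Consequently, every point that GreedyArb places in $R_P$ at time $t$ sits at the $x$-coordinate of some corner point in $C_t$ whose rectangle with $p$ is currently unsatisfied. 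Case (1) is then immediate: if no such corner point exists then $Z_{<t+1}^P = Z_{<t}^P$, so $C_{t+1} = C_t$.

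For case (2), let $u_1, \ldots, u_k$ denote the unsatisfied corner points sorted by $x$-coordinate (by the staircase property their $y$-coordinates are strictly decreasing), and let $m_i = (u_i.x, t)$ be the corresponding added points, so $k = |M_p^P|$. I would then classify the corner points of $Z_{<t+1}^P = Z_{<t}^P \cup \{m_1, \ldots, m_k\}$. Among the $m_i$, only $m_k$ can be a corner point: the others are dominated by $m_k$ (same $y = t$, strictly smaller $x$); and $m_k$ itself is a corner point because every point of $Z_{<t}^P$ has $y$-coordinate strictly less than $t$. For each old $q \in C_t$, being dominated by a new point is equivalent to $m_k.x \ge q.x$ (the condition $t \ge q.y$ is automatic), so $q$ survives into $C_{t+1}$ iff $q.x > u_k.x$. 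Points of $Z_{<t}^P$ that were already non-corner remain dominated by their old witnesses and so stay outside $C_{t+1}$.

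Putting these together gives $|C_{t+1}| = 1 + |\{q \in C_t : q.x > u_k.x\}|$. The punchline is that $u_1, \ldots, u_k$ themselves furnish $k$ members of $C_t$ with $x$-coordinate at most $u_k.x$, so $|\{q \in C_t : q.x > u_k.x\}| \le |C_t| - k$, yielding $|C_{t+1}| \le |C_t| - (|M_p^P| - 1)$ as claimed. I expect the only delicate step to be the first one: verifying that every non-corner $q \in Z_{<t}^P$ has $\Box pq$ witnessed by a dominator that sits simultaneously inside $R_P$ (hence strictly left of $p$) and strictly below time $t$. The hypothesis $p \in Q \cup Q_r$ is exactly what closes this gap, and it is also what makes the new points $m_i$ lie \emph{to the left of} $p$ on the line $y=t$, so that their internal domination structure is as simple as above.
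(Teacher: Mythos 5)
Your proof is correct and takes essentially the same route as the paper's: the other corner of each unsatisfied rectangle with $p$ must lie in $C_t$, all $|M^P_p|$ such corners are dominated by the newly added points and so leave the corner set, and only the rightmost new point can enter it. You additionally prove the first of these claims (via the dominator lying in $\Box pq$), which the paper only asserts.
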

\begin{proof}
If $|M^P_p| = 0$, then the number of corner points cannot increase as there are no points on line $y=t$ in region $R_P$.  So let $|M^P_p| >0$. The execution of GreedyArb tries to satisfy all the unsatisfied rectangles with one corner $p$. For each marked point in $M^P_p$, the corresponding other corner of the rectangle must lie in $C_t$. Let $N^P_{p} =  \{ q \;|\; q \in C_t$ and GreedyArb adds a point $r \in M^P_{p}$ due to unsatisfied rectangle $\Box qp\}$. At time $t+1$, all the points in $N^P_{p}$ cease to remain the corner point, because $\forall q \in N^P_{p}$ $\exists r \in M^P_{p}$ such that $r.x = q.x$ and $r.y > q.y$. So the decrease in the number of corner points is at least $|N^P_{p}| = |M^P_{p}|$. As in the proof of Lemma~\ref{lemma51}, only one point in $M^P_{p}$ can become a corner point for $Q$ at time $t+1$. So $C_{t+1} \le C_t - ( |M^P_{p}| -1 )$.
\end{proof}

Assume that $P_l = \emptyset$. Let $p$ be the last point to arrive in set $Q$. Applying Lemmas~\ref{lemma51} and \ref{lemma52} inductively we get, $|C_{p.y}| \le |C_1| + \displaystyle\sum_{q \in P} 1- \sum_{\substack{q \in Q \cup Q_r \\M^P_{q} \ne \phi}} (|M^P_{q}| -1)$. Since $|C_1| = 0$ this gives, 

\begin{align*}
\sum_{\substack{q \in Q \cup Q_r \\ M^P_{q} \ne \emptyset}} (|M^P_{q}| -1) \leq -|C_{p.y}| +  |P|,
\end{align*}

this gives 

\begin{align*}
\sum_{\substack{q \in Q \\ M^P_{q} \ne \emptyset}} (|M^P_{q}| -1) \leq -|C_{p.y}| +  |P|,
\end{align*}

and so 
\begin{align*}
\sum_{\substack{q \in Q \\ M^P_{q} \ne \emptyset}} |M^P_{q}| \leq  |P| + \sum_{\substack{q \in Q \\ M^P_{q} \ne \phi}} 1  \leq |P| + |Q| \leq 2k.
\end{align*}

%% \begin{tabular}{llllllll}
%% $\displaystyle\sum_{\substack{q \in Q \cup Q_r \\ M^P_{q} \ne \phi}} (|M^P_{q}| -1)$ &$\le$& $-|C_{p.y}| +  \displaystyle\sum_{q \in P} 1$\\
%% $\displaystyle\sum_{\substack{q \in Q \\ M^P_{q} \ne \phi}} (|M^P_{q}| -1)$ &$\le$& $  \displaystyle\sum_{q \in P} 1$\\
%% $\displaystyle\sum_{\substack{q \in Q \\ M^P_{q} \ne \phi}} |M^P_{q}|$ &$\le$& $  \displaystyle\sum_{q \in P} 1 + \sum_{\substack{q \in Q \\ M^P_{q} \ne \phi}} 1$\\
%% $\displaystyle\sum_{q \in Q} |M^P_{q}|$& $\le$ & $2k$\\
%% \end{tabular}

So if $P_l = \emptyset$, then the number of points added while processing points in $Q$ in region $R_P$ is $O(k)$. But if $P_l \ne \emptyset$, then the quantity $\sum_{q \in Q}|M^P_{q}|$ may be larger. We will argue that it's still $O(k)$. We denote the set of all the points added by $GreedyArb$ as $Y$.

\begin{definition}
For a point $p\in P$ let $T^p_{<t}$ be all the points from $X \cup Y$ on line $x=p.x$ with their $y$-coordinate less than $t$, where $t > p.y$. Let $q$ be the point in $T^p_{<t}$ with largest $y$-coordinate. We say that point $p$ is \emph{hidden} at time $t$ if there exist points $q_l, q_r \in R_p$ respectively to the left and right of $q$ on line $y=q.y$. If point $p$ is hidden at time $t$, point $q$ cannot be a corner point for $Q$ at time $t$.
\end{definition}

\begin{definition}
For a point $p \in P$ let $r$ be the point in $T^p_{<t}$ with the largest $y$-coordinate. We say that the point $p$ is 
\emph{exposed} at time $t$ if 
either there is no point to the left or no point to the right of point $r$ on the line $y=r.y$ in region $R_p$ (in other
words, at least one (or both) of the two sides (left and right) is empty). 
%% there is no point either to the left or right of point $r$ on line $y=r.y$ in region $R_P$.
If $p$ is exposed at time $t$, then point $r$ may potentially be a corner point for $Q$ at time $t$.
\end{definition}

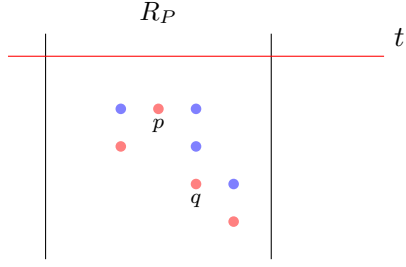
\begin{figure}
\label{figure3}
\begin{tikzpicture}
\coordinate (A) at (1.5,1.5);
\fill [red,opacity=0] (A) circle (2pt);
\begin{scope}[xshift=8cm]
\coordinate (A) at (2,.5);
\fill [red,opacity=.5] (A) circle (2pt); 

\coordinate (A) at (1.5,1);
\fill [red,opacity=.5] (A) circle (2pt); 

\coordinate (A) at (0.5,1.5);
\fill [red,opacity=.5] (A) circle (2pt);

\coordinate (A) at (1,2);
\fill [red,opacity=.5] (A) circle (2pt);

\coordinate (A) at (2,1);
\fill [blue,opacity=.5] (A) circle (2pt);

\coordinate (A) at (1.5,1.5);
\fill [blue,opacity=.5] (A) circle (2pt);

\coordinate (A) at (.5,2);
\fill [blue,opacity=.5] (A) circle (2pt);

\coordinate (A) at (1.5,2);
\fill [blue,opacity=.5] (A) circle (2pt);

\draw [above](1,2) node [below]{\scriptsize{$p$}};
\draw [above](1.5,1) node [below]{\scriptsize{$q$}};

\draw [above](1,3) node {\small{$R_P$}};

\draw[-] (-0.5,0) -- (-0.5,3);
\draw[-] (2.5,0) -- (2.5,3);

\draw[-,red] (-1,2.7) -- (4,2.7) ;
\node at(4,2.7)[above right] {$t$};
\end{scope}
\end{tikzpicture}
\caption{At time $t$ , $p \in X$ is hidden but $q \in X$ is exposed}
\end{figure}

\begin{lemma}
\label{lemma53}
Let $p \in P$ be hidden at time $t$ and let it be exposed for the first time at time $t'+1$. Let $q$ be the point processed by GreedyArb at time $t'$, then $q \in P$.
\end{lemma}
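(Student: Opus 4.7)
The plan is to track the topmost point of column $x = p.x$ and observe that the hidden-to-exposed transition forces GreedyArb to insert a new topmost point on that column at time $t'$, which in turn forces $q \in P$. So I would reduce the statement to a question about which rectangle GreedyArb can possibly satisfy by placing a point at $(p.x, t')$.

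First I would argue that the topmost element of column $x = p.x$ must actually change between time $t'$ and time $t'+1$. Indeed, if the top remained the same point $p_{\mathrm{top}} = (p.x, y_1)$, then the two points in $R_P$ flanking $p_{\mathrm{top}}$ on the line $y = y_1$ that witnessed $p$ being hidden at time $t'$ would still be present ($X \cup Y$ only grows with time), so $p$ would remain hidden at $t'+1$, contradicting the hypothesis.

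Next, since only $q$ and the elements of $M_q$ are created at time $t'$, and since $q.x \neq p.x$ (otherwise $q = p$, but $p$ arrived strictly earlier than $t \le t'$), the new top of column $x = p.x$ must be a point $r = (p.x, t') \in M_q$. Now I would derive $q \in P$ by contradiction. Recall that GreedyArb adds $(s, t')$ only when there is some existing $p' \in X \cup Y$ on column $s$ with $p'.y < t'$ such that $\Box q p'$ is unsatisfied; by the boundary clause of arboral satisfaction, any lower point on column $s$ is dominated as a witness by the topmost one, so it suffices to consider $p' = p_{\mathrm{top}}$. If $q \in Q \cup Q_r$, then $q.x$ lies strictly to the right of $R_P$, while the right-neighbor $(x_R, y_1) \in R_P$ of $p_{\mathrm{top}}$ (guaranteed by hiddenness) satisfies $p.x < x_R < q.x$ and lies on the bottom edge of $\Box q p_{\mathrm{top}}$, making it satisfied; hence GreedyArb would not add $(p.x, t')$, contradicting the previous paragraph. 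The case $q \in P_l$ is symmetric, using the left-neighbor $(x_L, y_1) \in R_P$ to satisfy $\Box q p_{\mathrm{top}}$. Thus $q \in P$.

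The main obstacle is conceptual rather than technical: the key insight is that the topmost element of column $x = p.x$ can only be updated by a point inserted at the current time step, and that for any $q$ outside $R_P$ the two hiddenness-witnesses automatically lie inside $\Box q p_{\mathrm{top}}$, blocking the very insertion needed to flip $p$ to exposed. Once this observation is in hand, the proof is just a case analysis on which side of $R_P$ contains $q$.
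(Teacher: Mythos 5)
Your proof is correct and follows essentially the same route as the paper's: identify the topmost point $r$ on column $x=p.x$, note that exposing $p$ requires GreedyArb to add a point at $(p.x,t')$, and use the two hiddenness witnesses $r_l,r_r$ to show the relevant rectangle $\Box qr$ is already satisfied whenever $q\in P_l\cup Q\cup Q_r$. Your extra observation that only the topmost point of the column needs to be considered as the other corner (since it satisfies any rectangle to a lower point on that column) is a small justification the paper leaves implicit, but it does not change the argument.
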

\begin{proof}
Since $p$ remains hidden at time $t'$, $\exists r$ such that $r$ has the largest $y$-coordinate in $T^p_{<t'}$ and there exist points $r_l$ and $r_r$ respectively to the left and right of $r$ on line $y=r.y$ in $R_P$. Also since $p$ is exposed for the first time at time $t'+1$, there must be a point added by GreedyArb at $(p.x,t')$. But if $q \in P_l$, then the rectangle $\Box rq$ is already satisfied by $r_l$. Similarly, if $q \in (Q \cup Q_r)$, then the rectangle $\Box rq$ is satisfied by $r_r$. So if $q \in (P_l \cup Q \cup Q_r)$, then GreedyArb will not add any point at $(p.x,t')$. So $q \in P$.
\end{proof}

\begin{corollary}
\label{coro54}
A point $p\in P$ can be hidden by any other point $q \in X$, but it can be exposed by a point in $P$ only. Also let  $p \in P$ be hidden at time $t$ and be exposed for the first time at time $t'+1$. Let $q \in (P_l \cup Q \cup Q_r)$ with $t < q.y < t'$, then GreedyArb cannot add any point at $(p.x,q.y)$. 
\end{corollary}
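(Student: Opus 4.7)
The plan is to derive both parts of the corollary directly from the argument used in Lemma~\ref{lemma53}, since that lemma already contains the essential combinatorial observation. For the first statement (``$p$ can be hidden by any point but exposed only by a point in $P$''), I would note that hiding is a trivial property: any point $q \in X$ that eventually causes an $r$ with flanking points $r_l, r_r$ to sit atop column $x=p.x$ can hide $p$, and there is no restriction on where $q$ lives. The non-trivial half is the claim about exposing, which is exactly what Lemma~\ref{lemma53} established at the \emph{first} moment of re-exposure.

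For the second statement, my plan is to repeat the argument of Lemma~\ref{lemma53} at each intermediate time $t'' \in (t, t')$. Fix such a time $t''$ with $q \in (P_l \cup Q \cup Q_r)$ processed at $t'' = q.y$. By hypothesis $p$ is still hidden at time $t''$, so there exists $r \in T^p_{<t''}$ with largest $y$-coordinate and flanking points $r_l$ to the left and $r_r$ to the right of $r$ on line $y = r.y$ inside $R_P$. If $q \in P_l$, then $q.x < r_l.x < r.x$ and $q.y > r.y$, so $r_l \in \Box rq$ and the rectangle $\Box rq$ is already arborally satisfied; hence GreedyArb does not add any point at $(p.x, q.y)$ on account of $r$. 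A symmetric argument using $r_r$ handles the case $q \in (Q \cup Q_r)$. Any older point $r'$ in $T^p_{<t''}$ with $r'.y < r.y$ forms a rectangle with $q$ that strictly contains $r$, and hence is also satisfied. Thus no point is added at $(p.x, q.y)$.

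The main (very mild) obstacle is bookkeeping: one has to be slightly careful that during the interval $(t, t')$ the witness point $r$ and its flankers $r_l, r_r$ may change over time, but since ``hidden'' is required at \emph{every} moment of the interval by the first-exposure hypothesis, a suitable triple always exists and the argument goes through unchanged. Once this is noted, the corollary follows immediately and the first claim is just the specialisation of the second to the instant $t'$.
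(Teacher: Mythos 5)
Your proposal is correct and matches the paper's intent: the paper offers no separate proof for this corollary, treating it as an immediate consequence of the argument in Lemma~\ref{lemma53}, and your proof simply re-runs that argument (the flanking points $r_l$, $r_r$ satisfy $\Box rq$, and any lower point on column $x=p.x$ forms a rectangle containing $r$) at each intermediate time $t''\in(t,t')$, which is exactly the intended reasoning.
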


\begin{definition}
For each $q \in M^P_{p}$, there exists a unique point $r \in X$ (we get the uniqueness because of our assumption that
on each vertical line $X$ has at most one point) such that $r.y < q.y$ and $r.x=q.x$. Let r be called the $parent$ to $q$ and be denoted as $parent(q)$. 
\end{definition}

Let $q_l$ and $q_r$ be the points in $M^P_{p}$ with the smallest and largest $x$-coordinates respectively. By definition, for each point  $q \in M^P_{p} \setminus \{q_r,q_l\}$, $parent(q)$ either remains hidden or is hidden at time $p.y+1$. Only $parent(q_l)$ and $parent(q_r)$ may become exposed at time $p.y+1$. So a point $p \in P$ can expose at most two points in $P$.

\begin{lemma}
\label{lemma55}
The total number of times the points in $P$ change their state from hidden to exposed and vice versa is at most $5k$.
\end{lemma}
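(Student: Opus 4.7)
The plan is to count the two types of state transitions (hidden-to-exposed and exposed-to-hidden) separately, using Corollary~\ref{coro54} as the main engine. The key asymmetry the corollary gives us is that only processing a point of $P$ can take a point in $P$ from hidden to exposed, whereas any arrival can push a point from exposed to hidden. Combined with the remark preceding this lemma (that processing a single $p \in P$ exposes at most two points in $P$, namely the parents of the leftmost and rightmost members of $M^P_p$), this immediately bounds the number of hidden-to-exposed transitions.

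More concretely, I would proceed as follows. First I would pin down when the state of a point $p \in P$ can change at all: by the definitions of hidden/exposed, the state of $p$ at time $t$ is determined entirely by the topmost point $r$ of $T^p_{<t}$ together with the points of $R_P$ on the horizontal line $y=r.y$. Since no points ever get added to a horizontal line after it is processed, the state of $p$ can only change at a time $t'$ at which GreedyArb places a new point at position $(p.x,t')$. I would then apply Corollary~\ref{coro54}: such a newly-placed point can trigger a hidden-to-exposed transition for $p$ only if the point being processed at time $t'$ lies in $P$.

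Next I would use the ``at most two exposures per processed $P$-point'' observation that the paper makes just before the lemma: for each $p \in P$, only $\mathrm{parent}(q_l)$ and $\mathrm{parent}(q_r)$, where $q_l, q_r$ are the extreme points of $M^P_p$, can move from hidden to exposed at time $p.y+1$. Summing over all $k$ points of $P$ yields at most $2k$ hidden-to-exposed transitions in total. Finally, for each individual point $p \in P$, the numbers of its two transition types differ by at most $1$, depending only on whether its first defined state (at time $p.y+1$) and its final state agree; so its exposed-to-hidden transitions are bounded by its hidden-to-exposed transitions plus $1$. Summing over the $k$ points gives at most $2k + k = 3k$ exposed-to-hidden transitions, and the grand total is $2k + 3k = 5k$.

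I do not expect any serious obstacle: the structural work has already been done in Lemma~\ref{lemma53}, Corollary~\ref{coro54}, and the parent-based argument sketched right before the lemma. The only place one must be careful is the bookkeeping around the initial state of each $p \in P$ at time $p.y+1$ (which can be either hidden or exposed depending on what GreedyArb placed on the row $y=p.y$), and the fact that we are comparing the counts of two interleaved transition types for the same point; this is exactly what produces the additive $+k$ term and hence the constant $5$ rather than $4$.
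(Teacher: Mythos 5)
Your proof is correct and follows essentially the same route as the paper: bound hidden-to-exposed transitions by $2k$ via the ``each processed point of $P$ exposes at most two parents'' observation, then bound exposed-to-hidden transitions by alternation plus the possible initial exposed state, giving $2k+3k=5k$. Your per-point alternation bookkeeping is in fact slightly cleaner than the paper's own accounting of the same $2k+2k+k$ split.
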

\begin{proof}
The point $p$ may initially be exposed. Once $p$ is hidden, it remains hidden till a point $q \in P$ exposes it. Each point $q \in P$ can expose at most two points of $P$. So the total number of points that can be exposed is at most $2k$. After that, only $k$ points can become hidden again as $|P| = k$, but these points cannot get exposed again. So the number of times points in $P$ change their state = $2k$ (exposed to hidden) $+ 2k$ (hidden to exposed) $+k$ (exposed to hidden) $= 5k$.
\end{proof}

Remark: This is a very crude analysis of the total number of times the points $P$ can change their state. The number of times the state changes is probably at most $2k$.
\begin{lemma} \label{lem:main}
 $\displaystyle\sum_{p \in Q} |M^P_{p}| \le 7k$.
\end{lemma}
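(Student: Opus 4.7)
The plan is to run the same corner-point telescoping argument as in the $P_l = \emptyset$ case, while carefully tracking the additional contribution that processings of points in $P_l$ make to the corner-count $|C_t|$. Let $p^* \in Q$ be the point of $Q$ processed last in time, and sum the one-step bounds $|C_{t+1}| - |C_t|$ from $t = 1$ to $t = p^*.y$. By Lemma~\ref{lemma51} every $P$-step contributes at most $+1$; by Lemma~\ref{lemma52} every $(Q \cup Q_r)$-step with $M^P_p \ne \emptyset$ contributes $-(|M^P_p|-1)$; and I introduce a term $\Delta_p$ for the change caused by each processed $p \in P_l$. Using $|C_1| = 0$, $|C_{p^*.y+1}| \ge 0$, and restricting the resulting sum from $Q \cup Q_r$ down to $Q$ (every dropped term is nonnegative) yields
\begin{equation*}
\sum_{p \in Q,\, M^P_p \ne \emptyset}(|M^P_p| - 1) \;\le\; |P| + \sum_{p \in P_l,\, p.y \le p^*.y}\Delta_p.
\end{equation*}

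The next step is to show $\Delta_p \le 1$ for every $p \in P_l$, with $\Delta_p = 0$ whenever $M^P_p = \emptyset$. All points of $M^P_p$ lie on the horizontal line $y = p.y$ inside $R_P$; only the rightmost, $q_r$, fails to be dominated from the right by another added point, so only $q_r$ can become a new corner point in $P$ for $Q$ at time $p.y+1$. Meanwhile every pre-existing corner $c \in C_{p.y}$ with $c.x \le q_r.x$ is dominated by $q_r$ from the upper right and is destroyed. Hence processing $p \in P_l$ raises $|C_t|$ by at most one.

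The main obstacle is then to bound $N := |\{p \in P_l : M^P_p \ne \emptyset,\, p.y \le p^*.y\}|$ by $5k$. For this I would combine Corollary~\ref{coro54} (any point GreedyArb adds in $R_P$ while processing a $P_l$-point sits on a column whose $P$-point is currently exposed) with Lemma~\ref{lemma55}'s cap of $5k$ on the total number of hidden$\leftrightarrow$exposed transitions of $P$-points. When $|M^P_p| \ge 3$, the $|M^P_p|-2$ interior points of $M^P_p$ each acquire neighbors on both sides on line $y=p.y$ in $R_P$ and therefore force an exposed$\to$hidden transition of the corresponding $P$-point, so $p$ can be charged to at least one such transition. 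The delicate sub-case is $|M^P_p| \in \{1,2\}$, where no state transition is directly induced; here I would charge $p$ to the exposed column hosting $q_r$ and, invoking Lemma~\ref{lemma53} (a hidden $P$-point can only be re-exposed when a point of $P$ is processed) together with the fact that after such a processing $q_r$ is the new topmost on that column, argue that only boundedly many $P_l$-charges can accumulate on any single column between two consecutive state transitions of that column. Combining the three ingredients yields
\begin{equation*}
\sum_{p \in Q}|M^P_p| \;=\; \sum_{p \in Q,\, M^P_p \ne \emptyset}(|M^P_p|-1) + |\{p \in Q : M^P_p \ne \emptyset\}| \;\le\; k + N + k \;\le\; 7k,
\end{equation*}
which is the desired bound. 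The hard part, as indicated, is the $|M^P_p|\in\{1,2\}$ sub-case of the charging argument, where one must exploit the precise geometry of GreedyArb rather than the state-transition bound alone.
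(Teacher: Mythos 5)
Your telescoping framework is set up correctly: the extension to steps $p\in P_l$ with $\Delta_p\le 1$ (and $\Delta_p=0$ when $M^P_p=\emptyset$) is sound, and the final arithmetic would give $7k$ if you could prove $N\le 5k$. But that intermediate claim is false: $N$, the number of points of $P_l$ whose processing adds at least one point in $R_P$, can be $\Theta(n)$ independently of $k$. Take $k=2$, $P=\{n-3,n-2\}$, $Q=\{n-1,n\}$, and request keys $n-3$, $n-2$ first, then keys $1,2,\dots,n-4$ in increasing order, then $Q$. At time $2$ GreedyArb adds $(n-3,2)$; thereafter, when key $i$ is processed at time $i+2$, the only unsatisfied rectangle reaching into $R_P$ is the one whose other corner is the topmost point of column $n-3$ (the rectangle to column $n-2$ is already satisfied by that very point), so GreedyArb adds the single point $(n-3,i+2)$. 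Column $n-3$ never acquires neighbours on both sides inside $R_P$, hence stays exposed forever: there are \emph{zero} exposed-to-hidden transitions, yet $n-4$ points of $P_l$ have $M^P_p\ne\emptyset$, all charging the same permanently exposed column. This simultaneously refutes $N\le 5k$ and your proposed repair (``boundedly many $P_l$-charges per column between consecutive transitions''). So the $|M^P_p|\in\{1,2\}$ sub-case you flagged is not a technicality to be filled in later; it is exactly where the approach breaks. (In this example $\sum_p\Delta_p$ happens to be $1$, so the quantity you ultimately need might still be $O(k)$, but your route to it via $N$ cannot work.)

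The paper avoids this entirely by dropping the corner-counting argument in the general case and bounding $\sum_{p\in Q}|M^P_p|$ directly: for each $p\in Q$ at most two added points are extreme (leftmost/rightmost of $M^P_p$), contributing at most $2k$ in total, while every non-extreme added point acquires neighbours on both sides of its column on the line $y=p.y$ inside $R_P$ and therefore forces its parent from exposed to hidden; by Corollary~\ref{coro54} that column must be re-exposed (which by Lemma~\ref{lemma53} only a $P$-step can do) before it can receive another such point, so distinct non-extreme points correspond to distinct exposed-to-hidden transitions, of which there are at most $5k$ by Lemma~\ref{lemma55}. Note the asymmetry this exploits: a $p\in Q$ with $|M^P_p|\in\{1,2\}$ contributes only extreme points and is absorbed into the $2k$ term, whereas your accounting must pay separately for every $p\in P_l$ with $M^P_p\ne\emptyset$, and there can be far too many of those.
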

\begin{proof}
Assume for contradiction $\displaystyle\sum_{p \in Q} |M^P_{p}| > 7k$. Let the two points in $M^P_{p}$ with the smallest and largest $x$-coordinates be called extreme points. Let $Z$ be the set of all extreme points in the set $\displaystyle\cup_{p \in Q} M^P_{p}$. So $|(\displaystyle\cup_{p \in Q} M^P_{p} )\setminus Z |> 5k$. By Corollary~\ref{coro54}, no point can be added below any point already hidden while processing $p$. So for each point $q \in (\displaystyle\cup_{p \in Q} M^P_{p}) \setminus Z$, $parent(q)$ must be exposed at time $p.y$ and gets hidden at time $p.y+1$. By Lemma~\ref{lemma55}, the number of times the points in $P$ can change state from exposed to hidden is at most $5k$. So this gives a contradiction. So $\displaystyle\sum_{p \in Q} |M^P_{p}| \le 7k$
\end{proof}

Thus we have shown that the number of points in $R_P$ added by GreedyArb when processing points in $Q$ is $O(k)$. It 
follows by symmetry that the number of points in $R_Q$ added by GreedyArb when processing points in $P$ is also $O(k)$. 

Let $T_{X[1\dots n]}$ be the total number of points added by GreedyArb algorithm. More generally, for $1 \leq i < j \leq n$,
let $T_{X[i,j]}$ be the maximum possible number of points added by GreedyArb in region $R_{X[i,j]}$ when processing points in set $X[i,j]$,
where $R_{X[i,j]}$ is the region between the vertical lines passing through $i-1/2$ and $j+1/2$. 
Here the maximum is taken over all
possible sets $X$ of size $n$, satisfying our assumption in Sec.~\ref{sec:probdef} (namely, each vertical line has at most
one point of $X$, and each horizontal line has at most one point of $X$). Now we can write the recurrence introduced at the
beginning of the proof. We divide the $2k$ points into two set. Set $P=X[j \dots k-1]$ contains the first half and the set 
$Q=X[j+k \dots j + 2k-1]$ contains the second half and $R_P$ and $R_Q$ are the corresponding regions.\\
\begin{tabular}{lll}
$T_{X[j\dots j+2k-1]}$ &$=$&  the total number of points added when processing points of $P$ in $R_Q$ $(\sum_{p \in P} |M^Q_{p}| )$ + \\
		&&    the total number of points added when processing points of $Q$ in $R_P$ $(\sum_{p \in Q} |M^P_{p}| )$ + \\
		&&    the total number of points added when processing points of $P$ in $R_P$ $(T_{X[j \dots j+k-1])}$) + \\ 
		&&   the total number of points added when processing points of $Q$ in $R_Q$ $(T_{X[j+k \dots j+2k-1]})$. \\ \\
\end{tabular}
Using Lemma~\ref{lem:main} this gives  $T_{X[j\dots j+2k-1]}$ = $T_{X[j \dots j+k-1])}$ +  $T_{X[j+k \dots j+2k-1]}$ + $O(k)$.\\
Which gives our desired result:  $T_{X[1\dots n]}$ = $ O(n \log n)$.

%% \section{Future work}
%% One of the important problem in this field is to find a data-structure which is $O(1)$ competitive. With the beautiful connection shown by the work of Demaine et al.~\cite{geom}, we believe that the geometric interpretation of the problem will be very helpful in proving this long standing question. The immediate question is to prove if GreedyArb is $O(1)$ optimal. This will automatically prove the existence of a BST (GreedyFuture) which is $O(1)$ competitive. 
\bibliographystyle{plain}
\nocite{*}
\bibliography{arboral}
\end{document}